\newcommand{\ket}[1]{\rvert#1\rangle}
\newcommand{\bra}[1]{\langle #1\rvert}
\newcommand{\Braket}[3]{\bra{#1}#2\ket{#3}}
\newtheorem{Proposition}{Proposition}
\title{A Dirac--Dunkl equation on $S^2$ \\ and the Bannai--Ito algebra}
\author[1]{Hendrik De Bie}
\author[2]{Vincent X. Genest}
\author[2]{Luc Vinet}
\affil[1]{Department of Mathematical Analysis, Faculty of Engineering and Architecture, Ghent University, Galglaan 2, 9000 Gent, Belgium}
\affil[2]{Centre de recherches math\'ematiques, Universit\'e de Montr\'eal, P.O. Box 6128, Centre-ville Station, Montr\'eal, Canada H3C 3J7}
\date{}
\begin{document}
\maketitle
\thispagestyle{empty}
\hrule
\begin{abstract}
\noindent
The Dirac--Dunkl operator on the 2-sphere associated to the $\mathbb{Z}_2^3$ reflection group is considered. Its symmetries are found and are shown to generate the Bannai--Ito algebra. Representations of the Bannai--Ito algebra are constructed using ladder operators. Eigenfunctions of the spherical Dirac-Dunkl operator are obtained using a Cauchy--Kovalevskaia extension theorem. These eigenfunctions, which correspond to Dunkl monogenics, are seen to support finite-dimensional irreducible representations of the Bannai--Ito algebra.
\smallskip

\noindent \textbf{Keywords:} Dirac--Dunkl equation, Bannai--Ito algebra, Cauchy--Kovalevskaia extension
\smallskip

\noindent \textbf{AMS classification numbers:} 81Q05, 81R99
\end{abstract}
\hrule
\section{Introduction}
The purpose of this paper is to study the Dirac--Dunkl operator on the two-sphere for the $\mathbb{Z}_2^3$ reflection group and to investigate its relation with the Bannai--Ito algebra.

The Bannai--Ito algebra is the associative algebra over the field of real numbers with generators $I_1$, $I_2$, and $I_3$ satisfying the relations
\begin{align}
\label{BI-Algebra}
 \{I_1,I_2\}=I_3+\alpha_3,\quad \{I_2,I_3\}=I_1+\alpha_1,\quad \{I_3,I_1\}=I_2+\alpha_2,
\end{align}
where $\{a,b\}=ab+ba$ is the anticommutator and where $\alpha_i$, $i=1,2,3$, are real structure constants. The algebra \eqref{BI-Algebra} was first presented in \cite{2012_Tsujimoto&Vinet&Zhedanov_AdvMath_229_2123} as the algebraic structure encoding the bispectral properties of the Bannai--Ito polynomials, which together with the Complementary Bannai--Ito polynomials are the parents of the family of $-1$ polynomials \cite{2013_Genest&Vinet&Zhedanov_SIGMA_9_18,2012_Tsujimoto&Vinet&Zhedanov_AdvMath_229_2123}. The Bannai--Ito algebra also arises in representation theoretic problems \cite{2014_Genest&Vinet&Zhedanov_ProcAmMathSoc_142_1545} and in superintegrable systems \cite{2014_Genest&Vinet&Zhedanov_JPhysA_47_205202}; see \cite{2015_DeBie&Genest&Tsujimoto&Vinet&Zhedanov} for a recent overview.

Following their introduction in \cite{1988_Dunkl_MathZeit_197_33, 1989_Dunkl_TransAmerMathSoc_311_167, 1991_Dunkl_CanJMath_43_1213}, Dunkl operators have appeared in various areas. They enter the study of Calogero--Moser--Sutherland models \cite{2000_VanDiejen&Vinet}, they play a central role in the theory of multivariate orthogonal polynomials associated to reflection groups \cite{2014_Dunkl&Xu}, they give rise to families of stochastic processes \cite{2008_Graczyk&Rosler&Yor, 1998_Rosler&Voit_AdvApplMath_21_575}, and they can be used to construct quantum superintegrable systems involving reflections \cite{2013_Genest&Ismail&Vinet&Zhedanov_JPhysA_46_145201,2014_Genest&Ismail&Vinet&Zhedanov_CommMathPhys_329_999}. Dunkl operators also find applications in harmonic analysis and integral transforms \cite{2013_Dai&Xu,2003_Rosler}, as they naturally lead to the Laplace-Dunkl operators, which are second-order differential/difference operator that generalize the standard Laplace operator.

In a recent paper \cite{2015_Genest&Vinet&Zhedanov_CommMathPhys}, the analysis of the Laplace-Dunkl operator on the two-sphere associated to the $\mathbb{Z}_2\times \mathbb{Z}_2\times \mathbb{Z}_2$ Abelian reflection group was cast in the frame of the Racah problem for the Hopf algebra $sl_{-1}(2)$ \cite{2011_Tsujimoto&Vinet&Zhedanov_SIGMA_7_93}, which is closely related to the Lie superalgebra $\mathfrak{osp}(1|2)$. It was established that the Laplace-Dunkl operator on the two-sphere $\Delta_{S^2}$ can be expressed as a quadratic polynomial in the Casimir operator corresponding to the three-fold tensor product of unitary irreducible representations of $sl_{-1}(2)$. A central extension of the Bannai--Ito algebra was seen to emerge as the invariance algebra for $\Delta_{S^2}$ and subspaces of the space of Dunkl harmonics that transform according to irreducible representations of the Bannai--Ito algebra were identified.

It is well known that the square root of the standard Laplace operator is the Dirac operator, which is a Clifford-valued first order differential operator. The study of Dirac operators is at the core of Clifford analysis, which can be viewed as a refinement of harmonic analysis \cite{1982_Brackx&Delanghe&Sommen}. Dirac operators also lend themselves to generalizations involving Dunkl operators \cite{2012_DeBie&Orsted&Somberg&Soucek_TransAmerMathSoc_364_3875, 2006_Cerejeiras&Kahler&Ren_ComplexVar&EllEqs_51_487, 2009_Orsted&Somberg&Soucek_AdvApplCliffAlg_19_403}. These so-called Dirac-Dunkl operators are the square roots of the corresponding Laplace-Dunkl operators and as such, they exhibit additional structure which makes their analysis both interesting and enlightening. 

In this paper, the Dirac-Dunkl operator on the two-sphere associated to the $\mathbb{Z}_2^3$ Abelian reflection group will be examined. We shall begin by discussing the Laplace-- and Dirac-- Dunkl operators in $\mathbb{R}^3$. The Dirac--Dunkl operator will be defined in terms of the Pauli matrices, which play the role of Dirac's gamma matrices for the three-dimensional Euclidean space. It will be shown that the Laplace-- and Dirac-- Dunkl operators can be embedded in a realization of $\mathfrak{osp}(1|2)$. The notion of Dunkl monogenics, which are homogeneous polynomial null solutions of the Dirac--Dunkl operator, will be reviewed as well as the corresponding Fischer theorem, which describes the decomposition of the space of homogeneous polynomials in terms of Dunkl monogenics. The Dirac--Dunkl operator on the two-sphere, to be called spherical, will then be defined in terms of generalized ``angular momentum'' operators written in terms of Dunkl operators and its relation with the spherical Laplace--Dunkl operator will be made explicit. The algebraic interpretation of the Dirac--Dunkl operator will proceed from noting its connection with the sCasimir operator of $\mathfrak{osp}(1|2)$. The symmetries of the spherical Dirac--Dunkl operator will be determined. Remarkably, these symmetries will be seen to satisfy the defining relations of the Bannai--Ito algebra. The relevant finite-dimensional unitary irreducible representations of the Bannai--Ito algebra will be constructed using ladder operators. An explicit basis for the eigenfunctions of the spherical Dirac--Dunkl operator will be obtained. The basis functions, which span the space of Dunkl monogenics, will be constructed systematically using a Cauchy--Kovalevskaia (CK) extension theorem. It will be shown that these spherical wavefunctions, which generalize spherical spinors, transform irreducibly under the action of the Bannai--Ito algebra.

The paper is divided as follows.
\begin{itemize}
 \item Section 2: Dirac-- and Laplace-- Dunkl operators in $\mathbb{R}^3$ and $S^2$, $\mathfrak{osp}(1|2)$ algebra
 \item Section 3: Symmetries of the Dirac--Dunkl operator on $S^2$, Bannai--Ito algebra
 \item Section 4: Ladder operators, Representations of the Bannai--Ito algebra
 \item Section 5: CK extension, Eigenfunctions of the spherical Dirac--Dunkl operator
\end{itemize}
\section{Laplace-- and Dirac-- Dunkl operators for $\mathbb{Z}_2^3$}
In this section, we introduce the Laplace-- and Dirac-- Dunkl operators associated to the $\mathbb{Z}_2^3$ reflection group. We show that these operators can be embedded in a realization of $\mathfrak{osp}(1|2)$. We define the Dunkl monogenics and the Dunkl harmonics and review the Fischer decomposition theorem. We introduce the spherical Laplace-- and Dirac-- Dunkl operators and we give their relation.
\subsection{Laplace-- and Dirac-- Dunkl operators in $\mathbb{R}^3$}
Let $\vec{x}=(x_1,x_2,x_3)$ denote the coordinate vector in $\mathbb{R}^3$ and let $\mu_i$, $i=1,2,3$, be real numbers such that $\mu_i\geqslant 0$. The Dunkl operators associated to the $\mathbb{Z}_2^3$ reflection group, denoted by $T_i$, are given by
\begin{align}
\label{Dunkl-Operators}
  T_i=\partial_{x_i}+\frac{\mu_i}{x_i}(1-R_i),\qquad i=1,2,3,
\end{align}
where 
\begin{align*}
 R_i f(x_i)=f(-x_i)
\end{align*}
is the reflection operator. It is obvious that the operators $T_i$, $T_{j}$ commute with one another. We define the Dirac--Dunkl operator in $\mathbb{R}^3$, to be denoted by $\underline{D}$, as follows:
\begin{align}
\label{Dunkl-Dirac-R3}
 \underline{D}=\sigma_1 T_1+\sigma_2 T_2+\sigma_3 T_3,
\end{align}
where the $\sigma_i$ are the familiar Pauli matrices
\begin{align*}
 \sigma_1=
 \begin{pmatrix}
  0 & 1
  \\
  1 & 0
 \end{pmatrix}, \quad
\sigma_2=
\begin{pmatrix}
 0 & -i
 \\
 i & 0
\end{pmatrix},\quad
\sigma_3=
\begin{pmatrix}
 1 & 0
 \\
 0 & -1
\end{pmatrix}.
\end{align*}
These matrices satisfy the identities
\begin{align*}
 [\sigma_i,\sigma_j]=2i\epsilon_{ijk}\sigma_k,\qquad \sigma_i\sigma_j=i\epsilon_{ijk}\sigma_k+\delta_{ij},
\end{align*}
where $[a,b]=ab-ba$ is the commutator, where $\epsilon_{ijk}$ is the Levi-Civita symbol and where summation over repeated indices is implied. The Pauli matrices provide a representation of the Euclidean Clifford algebra with three generators on $\mathbb{C}^2$, i.e. on the space of two-spinors. Indeed, one has
\begin{align}
\label{Clifford}
 \{\sigma_i,\sigma_j\}=2\delta_{ij},\qquad i,j=1,2,3.
\end{align}
As a direct consequence of \eqref{Clifford}, one has
\begin{align}
\label{Laplace-Dunkl-R3}
 \underline{D}^2=\Delta=T_1^2+T_2^2+T_3^2,
\end{align}
where $\Delta$ is the Laplace--Dunkl operator in $\mathbb{R}^3$. 

The Dirac--Dunkl and the Laplace--Dunkl operators \eqref{Dunkl-Dirac-R3} and \eqref{Laplace-Dunkl-R3} can be embedded in a realization of the Lie superalgebra $\mathfrak{osp}(1|2)$. Let $\underline{x}$ and $||\vec{x}||^2$ be the operators defined by
\begin{align*}
\underline{x}=\sigma_1 x_1+\sigma_2 x_2+\sigma_3 x_3,\qquad ||\vec{x}||^2=\underline{x}^2=x_1^2+x_2^2+x_3^2,
\end{align*}
and let $\mathbb{E}$ stand for the Euler (or dilation) operator
\begin{align*}
 \mathbb{E}=x_1 \partial_{x_1}+x_2 \partial_{x_2}+x_3 \partial_{x_3}.
\end{align*}
A direct calculation shows that one has
\begin{alignat}{3}
\label{OSP12}
\begin{aligned}
&\{\underline{x},\underline{x}\}= 2\,||\vec{x}||^2,&\quad &\{\underline{D},\underline{D}\}=2\,\Delta,&\quad &\{\underline{x},\underline{D}\}=2\,(\mathbb{E}+\gamma_3),
\\
&[\underline{D},||\vec{x}||^2]=2\,\underline{x},&\quad &[\mathbb{E}+\gamma_3, \underline{x}]=\underline{x},&\quad &[\mathbb{E}+\gamma_3, \underline{D}]=-\underline{D},\qquad [\Delta,\underline{x}]=2\,\underline{D},
\\
&[\mathbb{E}+\gamma_3, \Delta]=-2\,\Delta,&\quad &[\mathbb{E}+\gamma_3, ||\underline{x}||^2]=2\,||\vec{x}||^2,&\quad &[\Delta, ||\vec{x}||^2]=4\,(\mathbb{E}+\gamma_3),
\end{aligned}
\end{alignat}
where
\begin{align*}
 \gamma_3=\mu_1+\mu_2+\mu_3+3/2.
\end{align*}
The commutation relations \eqref{OSP12} are seen to correspond to those of the $\mathfrak{osp}(1|2)$ Lie superalgebra \cite{2000_Frappat&Sciarrino&Sorba}. In fact, the relations \eqref{OSP12} hold in any dimension and for any choice of the reflection group with different values of the constant $\gamma$ \cite{2012_DeBie&Orsted&Somberg&Soucek_TransAmerMathSoc_364_3875, 2009_Orsted&Somberg&Soucek_AdvApplCliffAlg_19_403}.

Let $\mathcal{P}_{N}(\mathbb{R}^3)$ denote the space of homogeneous polynomials of degree $N$ in $\mathbb{R}^3$, where $N$ is a non-negative integer. The space of \emph{Dunkl monogenics} of degree $N$ for the reflection group $\mathbb{Z}_2^3$ shall be denoted by $\mathcal{M}_{N}(\mathbb{R}^3)$. It is defined as
\begin{align}
\label{Dunkl-Monogenics}
 \mathcal{M}_{N}(\mathbb{R}^3):=\mathrm{Ker}\,\underline{D}\;\bigcap\;(\mathcal{P}_{N}(\mathbb{R}^3)\otimes \mathbb{C}^2).
\end{align}
Similarly, the space of scalar \emph{Dunkl harmonics} of degree $N$ for the reflection group $\mathbb{Z}_2^3$ is denoted by $\mathcal{H}_{N}(\mathbb{R}^3)$ and defined as \cite{2014_Dunkl&Xu}
\begin{align*}
 \mathcal{H}_{N}(\mathbb{R}^3):=\mathrm{Ker}\,\Delta\;\bigcap\;\mathcal{P}_{N}(\mathbb{R}^3).
\end{align*}
The space of spinor-valued Dunkl harmonics has a direct sum decomposition in terms of the Dunkl monogenics. This decomposition reads
\begin{align*}
 \mathcal{H}_{N}(\mathbb{R}^3)\otimes \mathbb{C}^2=\mathcal{M}_{N}(\mathbb{R}^3)\;\oplus\;\underline{x}\,\mathcal{M}_{N-1}(\mathbb{R}^3).
\end{align*}
For $\gamma_3>0$, which is automatically satisfied when $\mu_1,\mu_2,\mu_3\geqslant 0$, the following direct sum decomposition holds \cite{2009_Orsted&Somberg&Soucek_AdvApplCliffAlg_19_403}:
\begin{align}
\label{Fischer-Decomposition}
 \mathcal{P}_{N}(\mathbb{R}^3)\otimes \mathbb{C}^2=\bigoplus_{k=0}^{N}\underline{x}^{k}\mathcal{M}_{N-k}(\mathbb{R}^3).
\end{align}
The above is called the \emph{Fischer decomposition} and will play an important role in what follows.
\subsection{Laplace-- and Dirac-- operators on $S^2$}
The explicit expression for the Dunkl operators \eqref{Dunkl-Operators} allows to write the Laplace--Dunkl operator \eqref{Laplace-Dunkl-R3} as
\begin{align}
\label{Laplace-Dunkl-2}
 \Delta=\sum_{i=1}^{3} \partial_{x_i}^2+\frac{2\mu_i}{x_i} \partial_{x_i}-\frac{\mu_i}{x_i^2}(1-R_i).
\end{align}
Since reflections are elements of $O(3)$, the Laplace--Dunkl operator \eqref{Laplace-Dunkl-2}, like the standard Laplace operator, separates in spherical coordinates. Consequently, it can be restricted to functions defined on the unit sphere. Let $\Delta_{S^2}$ denote the restriction of \eqref{Laplace-Dunkl-2} to the two-sphere, which shall be referred to as the spherical Laplace--Dunkl operator. It is seen that $\Delta_{S^2}$ can be expressed as
\begin{align}
\label{Laplace-Dunkl-S2}
\Delta_{S^2}=||\vec{x}||^2\Delta-\mathbb{E}\,(\mathbb{E}+2\mu_1+2\mu_2+2\mu_3+1).
\end{align}
The spherical Laplace--Dunkl operator can also be written in terms of the Dunkl angular momentum operators. These operators are defined as
\begin{align}
\label{Dunkl-J}
 L_1=\frac{1}{i}(x_2 T_3-x_3 T_2),\quad L_2=\frac{1}{i}(x_3 T_1-x_1 T_3),\quad L_3=\frac{1}{i}(x_1 T_2-x_2 T_1),
\end{align}
and satisfy the commutation relations 
\begin{align}
\label{Dunkl-J-Commutation}
 [L_i, L_j]=i\epsilon_{ijk}\;L_k(1+2\mu_k R_k),\qquad [L_i, R_i]=0,\qquad \{L_i, R_{j}\}=0.
\end{align}
Taking into account the relation \eqref{Laplace-Dunkl-S2}, a direct calculation shows that \cite{2014_Genest&Vinet&Zhedanov_JPhysConfSer_512_012010} 
\begin{align}
\label{Laplace-Dunkl-S2-J2}
 -\Delta_{S^2}=L_1^2+L_2^2+L_3^2-2\sum_{1\leqslant i<j\leqslant 3}\mu_i\mu_j\,(1-R_iR_j)-\sum_{1\leqslant j\leqslant 3}\mu_j\,(1-R_{j}).
\end{align}
When $\mu_1=\mu_2=\mu_3=0$, the relation \eqref{Laplace-Dunkl-S2-J2} reduces to the standard relation between the Laplace operator on the two-sphere and the angular momentum operators.

Let us now introduce the main object of study: the Dirac--Dunkl operator on the two-sphere. This operator, denoted by  $\Gamma$, is defined as
\begin{align}
\label{Dirac-Dunkl-S2}
 \Gamma=\vec{\sigma}\cdot \vec{L}+\vec{\mu}\cdot\vec{R},
\end{align}
where $\vec{\mu}=(\mu_1,\mu_2,\mu_3)$ and $\vec{R}=(R_1,R_2,R_3)$. When $\mu_1=\mu_2=\mu_3=0$, all reflections disappear and \eqref{Dirac-Dunkl-S2} reduces to the standard Hamiltonian describing spin-orbit interaction. The operator \eqref{Dirac-Dunkl-S2} is linked to the spherical Laplace--Dunkl operator by a quadratic relation. Upon using the equations \eqref{Clifford}, \eqref{Dunkl-J-Commutation} and \eqref{Laplace-Dunkl-S2-J2}, one finds that
\begin{align}
\label{Quad-Relation}
 \Gamma^2+\Gamma=-\Delta_{S^2}+(\mu_1+\mu_2+\mu_3)(\mu_1+\mu_2+\mu_3+1).
\end{align}
A relation akin to \eqref{Quad-Relation} was derived in \cite{2015_Genest&Vinet&Zhedanov_CommMathPhys}; it involved a scalar operator instead of $\Gamma$. The Dirac--Dunkl operator on the two-sphere has a natural algebraic interpretation in terms of the realization \eqref{OSP12} of the $\mathfrak{osp}(1|2)$ algebra. It corresponds, up to an additive constant, to the so-called sCasimir operator. Indeed, it is verified that 
\begin{align*}
 \{\Gamma+1, \underline{x}\}=0,\qquad \{\Gamma+1,\underline{D}\}=0,
\end{align*}
and that
\begin{align*}
 [\Gamma+1, \mathbb{E}]=0,\quad [\Gamma+1,||\vec{x}||^2]=0,\quad [\Gamma+1,\Delta]=0.
\end{align*}
Hence $\Gamma+1$ anticommutes with the odd generators and commutes with the even generators of $\mathfrak{osp}(1|2)$, which is the defining property of the sCasimir operator \cite{1995_Lesniewski_JMathPhys_36_1457}. The spherical Dirac--Dunkl operator is usually written as a commutator (see for example \cite{2011_DeBie&DeSchepper_BullBelMathSoc_18_193}). For \eqref{Dirac-Dunkl-S2}, one has
\begin{align}
\label{Gamma-Commutator}
 \Gamma+1=\frac{1}{2}\big([\underline{D},\underline{x}]-1\big).
\end{align}

The space of Dunkl monogenics $\mathcal{M}_{N}(\mathbb{R}^3)$ of degree $N$ is an eigenspace for this operator. Indeed, upon using \eqref{Gamma-Commutator}, the $\mathfrak{osp}(1|2)$ relations \eqref{OSP12} and the fact that
\begin{align*}
 \underline{D}\,\mathcal{M}_{N}(\mathbb{R}^3)=0,\qquad \mathbb{E}\,\mathcal{M}_{N}(\mathbb{R}^3)=N \mathcal{M}_{N}(\mathbb{R}^3),
\end{align*}
one can write
\begin{align*}
 (\Gamma+1)\,\mathcal{M}_{N}(\mathbb{R}^3)&=\frac{1}{2}\Big([\underline{D},\underline{x}]-1\Big)\mathcal{M}_{N}(\mathbb{R}^3)
 =\frac{1}{2}\Big(\underline{D}\;\underline{x}-1\Big)\mathcal{M}_{N}(\mathbb{R}^3)
 \\
 &=\frac{1}{2}\Big(\{\underline{x},\underline{D}\}-1\Big)\mathcal{M}_{N}(\mathbb{R}^3)
=\frac{1}{2}\Big(2(\mathbb{E}+\gamma_3)-1\Big)\mathcal{M}_{N}(\mathbb{R}^3),
\end{align*}
which gives
\begin{align}
\label{Gamma-Eigenvalues}
 (\Gamma+1)\;\mathcal{M}_{N}(\mathbb{R}^3)=(N+\mu_1+\mu_2+\mu_3+1)\;\mathcal{M}_{N}(\mathbb{R}^3),
\end{align}
where $N=0,1,2,\ldots$ is a non-negative integer.
\section{Symmetries of the spherical Dirac--Dunkl operator}
In this section, the symmetries of the spherical Dirac--Dunkl operator are obtained and are seen to satisfy the defining relations of the Bannai--Ito algebra.

Introduce the operators $J_i$ defined by
\begin{align}
\label{Symmetries}
 J_i&=L_i+\sigma_i(\mu_j R_j+\mu_k R_k+1/2),\qquad i=1,2,3,
\end{align}
where $(ijk)$ is a cyclic permutation of $\{1,2,3\}$. The operators $J_i$ are symmetries of the spherical Dirac--Dunkl operator, as it is verified that
\begin{align*}
 [\Gamma,J_i]=0,\qquad i=1,2,3.
\end{align*}
The operator $\Gamma$ can be expressed in terms of the symmetries $J_i$ in the following way:
\begin{align*}
 \Gamma=\sigma_1 J_1+\sigma_2 J_2+\sigma_3 J_3-\mu_1 R_1-\mu_2 R_2-\mu_3 R_3-3/2.
\end{align*}
A direct calculation shows that the operators $J_i$ satisfy the commutation relations
\begin{align}
\label{Sym-Alg}
 [J_i,J_j]=i\epsilon_{ijk}\Big(J_{k}+2\mu_k\,(\Gamma+1)\,\sigma_k R_k+2\mu_i\mu_j\,\sigma_k R_iR_j\Big).
\end{align}
The operator $\Gamma$ also admits the three involutions
\begin{align}
\label{Discrete-Sym}
 Z_i=\sigma_i R_i,\qquad Z_i^2=1,\qquad i=1,2,3,
\end{align}
as symmetry operators, i.e
\begin{align*}
 [\Gamma,Z_i]=0,\qquad i=1,2,3.
\end{align*}
The commutation relations between $Z_i$ and $J_i$ read
\begin{align}
\label{Com-Z}
[J_i,Z_i]=0,\qquad \{J_i, Z_j\}=0,\qquad \{Z_i,Z_j\}=0,\qquad i\neq j.
\end{align}

The involutions \eqref{Discrete-Sym} and the relations \eqref{Com-Z} can be exploited to give another presentation of the symmetry algebra of $\Gamma$. Let $K_i$, $i=1,2,3$, be defined as follows
\begin{align}
\label{BI-Operators}
 K_i=-i\,J_i\,Z_{j}\,Z_{k},
\end{align}
where $(ijk)$ is again a cyclic permutation of $\{1,2,3\}$. Since the operators $J_i$ and $Z_i$ both commute with $\Gamma$, it follows that the operators $K_i$ also commute with $\Gamma$. Upon combining the relations \eqref{Sym-Alg} and \eqref{Com-Z}, one finds that the symmetries $K_i$ satisfy the commutation relations
\begin{align}
\label{BI-1}
 \{K_i,K_j\}&=\epsilon_{ijk}\Big(K_{k}+2\mu_k\,(\Gamma+1)\,R_1R_2R_3+2\mu_i\mu_j\Big).
\end{align}
The invariance algebra \eqref{BI-1} of the Dirac--Dunkl operator thus has the form of the Bannai--Ito algebra. More precisely, \eqref{BI-1} can be viewed as a central extension of the Bannai--Ito algebra given the presence of the central element $(\Gamma+1)R_1R_2R_3$ on the right hand side. In terms of the symmetries $K_i$, the $\Gamma$ operator reads
\begin{align}
\label{Gamma-k}
 \Gamma=K_1R_2R_3+K_2R_1R_3+K_3 R_1 R_2-\mu_1 R_1-\mu_2 R_2-\mu_3 R_3-3/2.
\end{align}
The commutation relations between the symmetries $K_i$ and the involutions $Z_i$ are
\begin{align*}
 [K_i,Z_j]=0.
\end{align*}
The Casimir operator of the Bannai--Ito algebra, denoted by $Q$,  has the expression \cite{2012_Tsujimoto&Vinet&Zhedanov_AdvMath_229_2123}
\begin{align}
\label{BI-Casimir}
 Q=K_1^2+K_2^2+K_3^2.
\end{align}
It is easily verified that $Q$ commutes with $K_i$ and $Z_i$ for $i=1,2,3$. A direct calculation shows that in the realization \eqref{BI-Operators}, the Casimir operator \eqref{BI-Casimir} can be written as
\begin{align*}
 Q=(\Gamma+1)^2+\mu_1^2+\mu_2^2+\mu_3^2-1/4.
\end{align*}
It follows from \eqref{Gamma-Eigenvalues} and \eqref{BI-1} that the space of Dunkl monogenics $\mathcal{M}_{N}(\mathbb{R}^3)$ of degree $N$ carries representations of the Bannai--Ito algebra \eqref{BI-Algebra}. The precise content of $\mathcal{M}_{N}(\mathbb{R}^3)$ in representations of the Bannai--Ito algebra will be determined in section 5.
\section{Representations of the Bannai--Ito algebra}
In this section, the representations of the Bannai--Ito algebra corresponding to the realization \eqref{BI-1} are constructed using ladder operators.

On the space of Dunkl monogenics $\mathcal{M}_{N}(\mathbb{R}^3)$ of degree $N$, the symmetries $K_i$ of the Dirac--Dunkl operator satisfy the commutation relations
\begin{align}
\label{BI-2}
 \{K_1,K_2\}=K_3+\omega_3,\quad \{K_2,K_3\}=K_1+\omega_1,\quad \{K_3,K_1\}=K_2+\omega_2,
\end{align}
with structure constants
\begin{align}
\label{Structure}
 \omega_3=2\mu_1\mu_2+2\mu_3\mu_N,\quad \omega_1=2\mu_2\mu_3+2\mu_1\mu_N,\quad \omega_2=2\mu_3\mu_1+2\mu_2\mu_N,
\end{align}
and where we have defined
\begin{align}
\label{Mu-N}
 \mu_{N}=(-1)^{N}(N+\mu_1+\mu_2+\mu_3+1).
\end{align}
On $\mathcal{M}_{N}(\mathbb{R}^3)$, the Casimir operator \eqref{BI-Casimir} takes the value
\begin{align}
\label{Casimir-Value}
Q=(N+\mu_1+\mu_2+\mu_3+1)^2+\mu_1^2+\mu_2^2+\mu_3^2-1/4\equiv q_{N}.
\end{align}
We seek to construct the representations of the Bannai--Ito algebra \eqref{BI-2} on the space spanned by the orthonormal basis vectors $\ket{N,k}$ characterized by the eigenvalue equations
\begin{align}
\label{Eigen-Setup}
 K_3\;\ket{N,k}=\lambda_k\;\ket{N,k},\qquad Q\;\ket{N,k}=q_{N}\;\ket{N,k}.
\end{align}
Since the operators $K_i$ are potential observables, it is formally assumed that
\begin{align}
\label{Star-Condition}
 K_i^{\dagger}=K_i,\qquad i=1,2,3.
\end{align}
To characterize the representation, one needs to determine the spectrum of $K_3$ and the action of the operator $K_1$ on the basis vectors $\ket{N,k}$. 

Introduce the operators $K_{+}$ and $K_{-}$ defined by \cite{2012_Tsujimoto&Vinet&Zhedanov_AdvMath_229_2123}
\begin{align}
\label{Ladder-Def}
\begin{aligned}
 K_{+}&=(K_1+K_2)(K_3-1/2)-(\omega_1+\omega_2)/2,
 \\
 K_{-}&=(K_1-K_2)(K_3+1/2)+(\omega_1-\omega_2)/2.
 \end{aligned}
\end{align}
Using the defining relations \eqref{BI-2} and the Hermiticity condition \eqref{Star-Condition}, it is seen that the operators $K_{\pm}$ are skew-Hermitian, i.e. 
\begin{align*}
 K_{\pm}^{\dagger}=-K_{\pm}.
\end{align*}
Moreover, a direct calculation shows that they satisfy the commutation relations
\begin{align}
\label{Commu}
 \{K_3,K_{+}\}=K_{+},\quad \{K_{3},K_{-}\}=-K_{-},
\end{align}
whence it follows that
\begin{align*}
 [K_3,K_{+}^2]=0,\qquad [K_3,K_{-}^2]=0.
\end{align*}
Using \eqref{Commu}, one can write
\begin{align}
\label{Ultra}
\begin{aligned}
 K_3\,K_{+}\,\ket{N,k}&=(K_{+}-K_{+}K_3)\,\ket{N,k}=(1-\lambda_k)\,K_{+}\,\ket{N,k},
 \\
 K_3\,K_{-}\,\ket{N,k}&=(-K_{-}-K_{-}K_{3})\,\ket{N,k}=(-1-\lambda_k)\,K_{-}\,\ket{N,k}.
 \end{aligned}
\end{align}
The above relations indicate that $K_{+} \ket{N,k}$ and $K_{-}\ket{N,k}$ are eigenvectors of $K_3$ with eigenvalues $(1-\lambda_k)$ and $-(1+\lambda_k)$, respectively. One has the two inequalities
\begin{subequations}
\begin{align}
\label{First}
 ||K_{+}\ket{N,k}||^2&=\Braket{N,k}{\,K_{+}^{\dagger}\,K_{+}\,}{N,k}\geqslant 0,
 \\
 \label{Second}
 ||K_{-}\ket{N,k}||^2&=\Braket{N,k}{\,K_{-}^{\dagger}\,K_{-}\,}{N,k}\geqslant 0.
\end{align}
\end{subequations}
Consider the LHS of \eqref{First}. The operator $K_{+}^{\dagger}$ is of the form
\begin{align*}
 K_{+}^{\dagger}=(K_3-1/2)(K_1+K_2)-(\omega_1+\omega_2)/2.
\end{align*}
Upon using the commutation relations \eqref{BI-2} and \eqref{Casimir-Value}, it is seen that
\begin{align}
\label{Kp-KpDagger}
 K_{+}^{\dagger}K_{+}=(K_3-1/2)^2(Q-K_3^2+K_3+\omega_3)-(\omega_1+\omega_2)^2/4.
\end{align}
Using the above expression in \eqref{First} with \eqref{Structure} and \eqref{Casimir-Value}, one finds a factorized form for the inequality
\begin{align}
\label{Ineq1}
 -(\mu_1+\mu_2+1/2-\lambda_k)(\mu_{N}+\mu_3+1/2-\lambda_k)(\lambda_k+\mu_1+\mu_2-1/2)(\lambda_k+\mu_N+\mu_3-1/2)\geqslant 0.
\end{align}
which is equivalent to
\begin{align}
\label{Ineq11}
 \begin{cases}
  \mu_1+\mu_2\leqslant|\lambda_k-1/2|\leqslant N+\mu_1+\mu_2+2\mu_3+1, & \text{$N$ even,}
  \\
  \mu_1+\mu_2\leqslant |\lambda_k-1/2|\leqslant N+\mu_1+\mu_2+1,& \text{$N$ odd.}
 \end{cases}
\end{align}
Proceeding similarly for $K_{-}$, one can write
\begin{align}
\label{Km-KmDagger}
  K_{-}^{\dagger}K_{-}=(K_3+1/2)^2(Q-K_3^2-K_3-\omega_3)-(\omega_1-\omega_2)^2/4.
\end{align}
and one finds that \eqref{Second} amounts to
\begin{align}
\label{Ineq2}
 -(\mu_1-\mu_2-1/2-\lambda_k)(\mu_{3}-\mu_N-1/2-\lambda_k)(\lambda_k+\mu_1-\mu_2+1/2)(\lambda_k+\mu_3-\mu_N+1/2)\geqslant 0,
\end{align}
which can be expressed as
\begin{align}
\label{Ineq22}
 \begin{cases}
  |\mu_1-\mu_2|\leqslant|\lambda_k+1/2|\leqslant N+\mu_1+\mu_2+1, & \text{$N$ even},
  \\
  |\mu_1-\mu_2|\leqslant|\lambda_k+1/2|\leqslant N+\mu_1+\mu_2+2\mu_3+1, & \text{$N$ odd},
 \end{cases}
\end{align}
Upon combining \eqref{Ineq11} and \eqref{Ineq22}, we choose
\begin{align*}
 \lambda_0=\mu_1+\mu_2+1/2,
\end{align*}
whence it follows from \eqref{Ineq1} that 
\begin{align*}
 K_{+}\ket{N,0}=0.
\end{align*}
Let us mention that the other choices $\lambda_0=-(\mu_1+\mu_2+1/2)$ and $\widetilde{\lambda}_0=\pm(-\mu_1-\mu_2+1/2)$ permitted by \eqref{Ineq11} do not lead to admissible representations. 

Starting from the vector $\ket{N,0}$ with eigenvalue $\lambda_0$, one can obtain a string of eigenvectors of $K_3$ with different eigenvalues by successively applying $K_{+}$ and $K_{-}$. The eigenvalues
\begin{align}
\label{K3-Eigenvalues}
 \lambda_k=(-1)^{k}(k+\mu_1+\mu_2+1/2),\qquad k=0,1,2,3,\ldots.
\end{align}
are obtained by applying $K_3$ on the vectors 
\begin{align}
\label{Sequence}
 \ket{N,0},\quad K_{-}\ket{N,0},\quad K_{+}K_{-}\ket{N,0},\quad K_{-}K_{+}K_{-}\ket{N,0}, \ldots
\end{align}
One needs to alternate the application of $K_{+}$ and $K_{-}$ since $K_{\pm}^2$ commute with $K_3$ and hence their action does not produce an eigenvector with a different eigenvalue. Using \eqref{K3-Eigenvalues}, one can write
\begin{gather*}
 ||K_{+}\ket{N,k}||^2=
 \begin{cases}
  \rho_{k}^{(N)}, & \text{$k$ even,}
  \\
  \rho_{k+1}^{(N)},& \text{$k$ odd,}
 \end{cases}
\end{gather*}
where
\begin{align}
\label{Rho}
 \rho_k^{(N)}=-k(k+2\mu_1+2\mu_2)(k+\mu_1+\mu_2+\mu_3+\mu_N)(k+\mu_1+\mu_2-\mu_3-\mu_{N}),
\end{align}
and also
\begin{align*}
 ||K_{-}\ket{N,k}||^2=
 \begin{cases}
  \sigma_{k+1}^{(N)}, & \text{$k$ even,}
  \\
  \sigma_{k}^{(N)},& \text{$k$ odd,}
 \end{cases}
\end{align*}
with
\begin{align}
\label{Sigma}
 \sigma_k^{(N)}=-(k+2\mu_1)(k+2\mu_2)(k+\mu_1+\mu_2-\mu_3+\mu_{N})(k+\mu_1+\mu_2+\mu_3-\mu_N).
\end{align}
It is verified that the positivity conditions $\rho_{k}^{(N)}>0$ and $\sigma_{k}^{(N)}>0$ are satisfied for all $k=0,1,\ldots,N$, provided that $\mu_i\geqslant 0$ for $i=1,2,3$. Following \eqref{Sequence}, \eqref{Rho} and \eqref{Sigma}, we define the orthonormal basis vectors $\ket{N,k}$ from $\ket{N,0}$ as follows:
\begin{align}
\label{State-Def}
 \ket{N,k+1}=
\begin{cases}
 \frac{1}{\sqrt{||K_{-}\ket{N,k}||^2}}\;K_{-}\ket{N,k}, & \text{$k$ even},
 \\
 \frac{-1}{\sqrt{||K_{+}\ket{N,k}||^2}}\;K_{+}\ket{N,k}, & \text{$k$ odd},
\end{cases}
\end{align}
where the phase factor was chosen to ensure the condition $K_{\pm}^{\dagger}=-K_{\pm}$. From \eqref{Kp-KpDagger}, \eqref{Km-KmDagger}, \eqref{Sequence} and \eqref{State-Def}, the actions of the ladder operators $K_{\pm}$ are seen to have the expressions
\begin{align}
\label{Actions}
\begin{aligned}
 K_{+}\ket{N,k}=
 \begin{cases}
  \sqrt{\rho_{k}^{(N)}}\;\ket{N,k-1}, & \text{$k$ even},
  \\
  -\sqrt{\rho_{k+1}^{(N)}}\;\ket{N,k+1}, & \text{$k$ odd},
 \end{cases}
 \\[.1cm]
 K_{-}\ket{N,k}=
\begin{cases}
  \sqrt{\sigma_{k+1}^{(N)}}\,\ket{N,k+1},& \text{$k$ even},
  \\
   -\sqrt{\sigma_{k}^{(N)}}\,\ket{N,k-1}, & \text{$k$ odd}.
 \end{cases}
 \end{aligned}
\end{align}
As is observed in \eqref{Rho} and \eqref{Sigma}, one has $K_{+}\ket{N,N}=0$ when $N$ is odd and $K_{-}\ket{N,N}=0$ when $N$ is even. As a result, the representation has dimension $N+1$. Moreover, it immediately follows from the actions \eqref{Actions} that the representation is irreducible, as there are no invariant subspaces.

Let us now give the actions of the generators. The eigenvalues of $K_3$ are of the form
\begin{align*}
 K_3\ket{N,k}=(-1)^{k}(k+\mu_1+\mu_2+1/2)\,\ket{N,k},\qquad k=0,1,\ldots,N.
\end{align*}
The action of the operator $K_1$ in the basis $\ket{N,k}$ can be obtained directly from the definitions \eqref{Ladder-Def} and the actions \eqref{Actions}. One finds that $K_1$ acts in the tridiagonal fashion
\begin{align*}
 K_1\ket{N,k}=U_{k+1}\ket{N,k+1}+V_{k}\ket{N,k}+U_{k}\ket{N,k-1},
\end{align*}
with
\begin{align*}
 U_{k}=\sqrt{A_{k-1}C_{k}},\quad V_{k}=\mu_2+\mu_3+1/2-A_{k}-C_{k},
\end{align*}
where the coefficients $A_{k}$ and $C_{k}$ read
\begin{align}
\begin{aligned}
 A_{k}&=
 \begin{cases}
  \frac{(k+2\mu_2+1)(k+\mu_1+\mu_2+\mu_3-\mu_{N}+1)}{2(k+\mu_1+\mu_2+1)}, & \text{$k$ even},
  \\
  \frac{(k+2\mu_1+2\mu_2+1)(k+\mu_1+\mu_2+\mu_3+\mu_{N}+1)}{2(k+\mu_1+\mu_2+1)}, & \text{$k$ odd},
 \end{cases}
 \\
 C_{k}&=
 \begin{cases}
  -\frac{k(k+\mu_1+\mu_2-\mu_3-\mu_{N})}{2(k+\mu_1+\mu_2)}, & \text{$k$ even},
  \\
  -\frac{(k+2\mu_1)(k+\mu_1+\mu_2-\mu_3+\mu_{N})}{2(k+\mu_1+\mu_2)}, & \text{$k$ odd}.
 \end{cases} 
 \end{aligned}
\end{align}
For $\mu_i\geqslant0$, $i=1,2,3$, one has $U_\ell>0$ for $\ell=1,\ldots,N$ and $U_0=U_{N+1}=0$. Hence in the basis $\ket{N,k}$, the operator $K_1$ is represented by a symmetric $(N+1)\times (N+1)$ matrix.

It is observed that the commutation relations \eqref{BI-2} along with the structure constants \eqref{Structure} and the Casimir value \eqref{Casimir-Value} are invariant under any cyclic permutation of the pairs $(K_i,\mu_i)$ for $i=1,2,3$. Consequently, the matrix elements of the generators in other bases, for example bases in which $K_1$ or $K_2$ are diagonal, can be obtained directly by applying the corresponding cyclic permutation on the parameters $\mu_i$.

\section{Eigenfunctions of the spherical Dirac--Dunkl operator}
In this section, a basis for the space of Dunkl monogenics $\mathcal{M}_{N}(\mathbb{R}^3)$ of degree $N$ is constructed using a Cauchy-Kovalevskaia extension theorem. It is shown that the basis functions transform irreducibly under the action of the Bannai--Ito algebra. The wavefunctions are shown to be orthogonal with respect to a scalar product defined as an integral over the 2-sphere.

\subsection{Cauchy-Kovalevskaia map}
Let $\widetilde{\underline{D}}$, $\widetilde{\underline{x}}$ and $\widetilde{\mathbb{E}}$ be defined as follows:
\begin{align*}
 \widetilde{\underline{D}}=\sigma_1 T_1+\sigma_2 T_2,\qquad \widetilde{\underline{x}}=\sigma_1 x_1+\sigma_2 x_2,\quad \widetilde{\mathbb{E}}=x_1 \partial_{x_1}+x_2 \partial_{x_2}.
\end{align*}
There is an isomorphism $\mathbf{CK}_{x_3}^{\mu_3}: \mathcal{P}_{N}(\mathbb{R}^2)\otimes \mathbb{C}^2\longrightarrow \mathcal{M}_{N}(\mathbb{R}^3)$, between the space of spinor-valued homogeneous polynomials of degree $N$ in the variables $(x_1,x_2)$ and the space of Dunkl monogenics of degree $N$ in the variables $(x_1,x_2,x_3)$.
\begin{Proposition}
 The isomorphism $\mathbf{CK}_{x_3}^{\mu_3}$ between $\mathcal{P}_{N}(\mathbb{R}^2)\otimes \mathbb{C}^2$ and $\mathcal{M}_{N}(\mathbb{R}^3)$ has the explicit expression
 \begin{align}
 \label{CK-X3}
  \mathbf{CK}_{x_3}^{\mu_3}={}_0F_{1}\left(\genfrac{}{}{0pt}{}{-}{\mu_3+1/2}\;\Big\rvert\;-\left(\frac{x_3\,\widetilde{\underline{D}}}{2}\right)^2\right)
  -\frac{\sigma_3\,x_3\,\widetilde{\underline{D}}}{2\mu_3+1}\;\;{}_0F_{1}\left(\genfrac{}{}{0pt}{}{-}{\mu_3+3/2}\;\Big\rvert\;-\left(\frac{x_3\,\widetilde{\underline{D}}}{2}\right)^2\right),
 \end{align}
 where ${}_pF_{q}$ is the generalized hypergeometric series \cite{2001_Andrews&Askey&Roy}.
\end{Proposition}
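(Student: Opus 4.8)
The plan is to characterize $\mathbf{CK}_{x_3}^{\mu_3}$ by the two properties that determine it uniquely, and then verify that the proposed series satisfies them. The Cauchy--Kovalevskaia extension in this setting sends a polynomial $p(x_1,x_2)\in\mathcal{P}_N(\mathbb{R}^2)\otimes\mathbb{C}^2$ to the unique element $P(x_1,x_2,x_3)\in\mathcal{M}_N(\mathbb{R}^3)$ satisfying $\underline{D}P=0$ together with the initial condition $P|_{x_3=0}=p$. First I would make explicit why such a $P$ exists and is unique: writing $\underline{D}=\widetilde{\underline{D}}+\sigma_3 T_3$ and noting that on polynomials $T_3=\partial_{x_3}+\tfrac{\mu_3}{x_3}(1-R_3)$ acts as $\partial_{x_3}$ on the even-in-$x_3$ part and as $\partial_{x_3}+\tfrac{2\mu_3}{x_3}$ on the odd part, the equation $\underline{D}P=0$ becomes a recursion in the $x_3$-degree. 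Separating $P=P^{+}+P^{-}$ into its even and odd components in $x_3$, the monogenicity equation $\sigma_3 T_3 P=-\widetilde{\underline{D}}P$ couples $P^{+}$ and $P^{-}$ and, degree by degree in $x_3$, determines all higher coefficients from $P^{+}|_{x_3=0}=p^{+}$ and $P^{-}|_{x_3=0}=p^{-}$; this gives existence, uniqueness, and the fact that the map lands in $\mathcal{M}_N(\mathbb{R}^3)$ and is a bijection with inverse given by restriction (degree $N$ is preserved since $\widetilde{\underline{D}}$ lowers degree by one and multiplication by $x_3$ raises it by one).

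Next I would compute the recursion explicitly. Acting with $\sigma_3 T_3$ on $x_3^{2m}q$ (with $q$ independent of $x_3$) yields $2m\,x_3^{2m-1}\sigma_3 q$, while on $x_3^{2m+1}q$ it yields $(2m+1+2\mu_3)x_3^{2m}\sigma_3 q$. Setting $P=\sum_{j\geq 0} x_3^{j} p_j$ with $p_0=p$, the equation $\sigma_3 T_3 P+\widetilde{\underline{D}}P=0$ gives a two-step recursion: the coefficient of $x_3^{2m-1}$ reads $2m\,\sigma_3 p_{2m}+\widetilde{\underline{D}}p_{2m-1}=0$, and the coefficient of $x_3^{2m}$ reads $(2m+1+2\mu_3)\sigma_3 p_{2m+1}+\widetilde{\underline{D}}p_{2m}=0$. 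Solving, $p_{2m}=-\tfrac{1}{2m}\sigma_3\widetilde{\underline{D}}p_{2m-1}$ and $p_{2m+1}=-\tfrac{1}{2m+1+2\mu_3}\sigma_3\widetilde{\underline{D}}p_{2m}$. Iterating and using $(\sigma_3\widetilde{\underline{D}})(\sigma_3\widetilde{\underline{D}})=-\widetilde{\underline{D}}^2$ (since $\sigma_3$ anticommutes with $\sigma_1,\sigma_2$ and $\sigma_3^2=1$, so $\sigma_3\widetilde{\underline D}\sigma_3=-\widetilde{\underline D}$, hence $(\sigma_3\widetilde{\underline D})^2=-\widetilde{\underline D}^{\,2}$), one collapses the even-index coefficients into $p_{2m}=\tfrac{(-1)^m}{4^m m!\,(\mu_3+1/2)_m}\big(\widetilde{\underline D}^{\,2}\big)^{m} p$ -- here I use the Pochhammer identity $2\cdot 4\cdots(2m)=2^m m!$ and $(2\mu_3+1)(2\mu_3+3)\cdots(2\mu_3+2m-1)=2^m(\mu_3+1/2)_m$ -- and similarly $p_{2m+1}=-\tfrac{(-1)^m}{(2\mu_3+1)\,4^m m!\,(\mu_3+3/2)_m}\sigma_3\widetilde{\underline D}\big(\widetilde{\underline D}^{\,2}\big)^{m}p$. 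Resumming $\sum_m x_3^{2m}p_{2m}$ and $\sum_m x_3^{2m+1}p_{2m+1}$ and recognizing the hypergeometric series ${}_0F_1\!\big(\genfrac{}{}{0pt}{}{-}{c}\,\big|\,z\big)=\sum_m \tfrac{z^m}{(c)_m m!}$ with $z=-(x_3\widetilde{\underline D}/2)^2$ reproduces exactly \eqref{CK-X3}. The series are finite on $\mathcal{P}_N$ because $\widetilde{\underline D}$ is nilpotent there.

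Finally I would check consistency: the formula as written must genuinely square to a monogenic, i.e. one should verify directly that $\underline{D}$ annihilates the right-hand side of \eqref{CK-X3}, which amounts to checking that the two recursions above are simultaneously satisfied -- a short computation using $\{\sigma_3,\widetilde{\underline D}\}=0$ and the derivative formulas for ${}_0F_1$. The initial condition $\mathbf{CK}_{x_3}^{\mu_3}|_{x_3=0}=\mathrm{Id}$ is immediate since all $x_3$-dependent terms vanish and both ${}_0F_1$ factors reduce to $1$. The main obstacle is bookkeeping: correctly tracking the action of the Dunkl term $\tfrac{\mu_3}{x_3}(1-R_3)$ on the odd-degree part -- this is what produces the shifted parameter $\mu_3+3/2$ and the $1/(2\mu_3+1)$ prefactor in the second term -- and matching the resulting double-factorial products with the Pochhammer symbols $(\mu_3+1/2)_m$ and $(\mu_3+3/2)_m$. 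Once the recursion is set up cleanly this is routine, but the $R_3$-parity split is the one place where an error would propagate.
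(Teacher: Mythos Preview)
Your argument is correct and follows essentially the same route as the paper: expand the extension as a power series in $x_3$, impose $\underline{D}P=0$ to obtain a two-step recursion for the coefficients, solve it, and identify the resulting sums as ${}_0F_1$ series. The only cosmetic difference is that the paper uses the ansatz $\sum_\alpha(\sigma_3 x_3)^\alpha p_\alpha$ rather than $\sum_j x_3^j p_j$, which absorbs the $\sigma_3$ factors into the monomials and slightly streamlines the bookkeeping; your explicit use of $(\sigma_3\widetilde{\underline{D}})^2=-\widetilde{\underline{D}}^{\,2}$ accomplishes the same thing.
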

\begin{proof}
 Let $p(x_1,x_2)\in \mathcal{P}_{n}(\mathbb{R}^2)\otimes \mathbb{C}^2$. We set 
 \begin{align*}
  \mathbf{CK}_{x_3}^{\mu_3}[p(x_1,x_2)]=\sum_{\alpha=0}^{n}(\sigma_3 x_3)^{\alpha}p_{\alpha}(x_1,x_2),
 \end{align*}
with $p_0(x_1,x_2)\equiv p(x_1,x_2)$ and $p_{\alpha}(x_1,x_2)\in \mathcal{P}_{n-\alpha}(\mathbb{R}^2)\otimes \mathbb{C}^2$ and we determine the $p_{\alpha}(x_1,x_2)$ such that $\mathbf{CK}_{x_3}^{\mu_3}[p(x_1,x_2)]$ is in the kernel of $\underline{D}$. One has
\begin{multline*}
 \underline{D}\,\mathbf{CK}_{x_3}^{\mu_3}[p(x_1,x_2)]=
 \sum_{\alpha=0}^{n}(-\sigma_3 x_3)^{\alpha}(\sigma_1 T_1+\sigma_2 T_2)p_{\alpha}(x_1,x_2)+\sum_{\alpha=1}^{n}\sigma_3^{\alpha+1} (T_3 x_3^{\alpha}) p_{\alpha}(x_1,x_2)
 \\
 =\sum_{\alpha=0}^{n}(-\sigma_3 x_3)^{\alpha}(\sigma_1 T_1+\sigma_2 T_2)p_{\alpha}(x_1,x_2)
 +\sum_{\alpha=1}^{n}\sigma_3^{\alpha+1}\;[\alpha+\mu_3 (1-(-1)^{\alpha})]\;x_3^{\alpha-1}p_{\alpha}(x_1,x_2).
\end{multline*}
Imposing the condition $\underline{D}\,\mathbf{CK}_{x_3}^{\mu_3}[p(x_1,x_2)]=0$ leads to the equations
\begin{align*}
 \sum_{\alpha=0}^{n}(-1)^{\alpha+1}(\sigma_3 x_3)^{\alpha}\,(\sigma_1 T_1+\sigma_2 T_2)p_{\alpha}(x_1,x_2)=
 \sum_{\alpha=0}^{n-1}(\sigma_3 x_3)^{\alpha}[\alpha+\mu_3(1+(-1)^{\alpha})]p_{\alpha+1}(x_1,x_2),
\end{align*}
from which one finds that
\begin{align*}
 p_{2\alpha}(x_1,x_2)&=\left[\frac{(-1)^{\alpha}}{2^{2\alpha}\,\alpha!\,(\mu_3+1/2)_{\alpha}}\right](\sigma_1 T_1+\sigma_2 T_2)^{2\alpha}p(x_1,x_2),
 \\
 p_{2\alpha+1}(x_1,x_2)&=\left[\frac{(-1)^{\alpha+1}}{2^{2\alpha+1}\,\alpha!\,(\mu_3+1/2)\,(\mu_3+3/2)_{\alpha}}\right](\sigma_1 T_1+\sigma_2 T_2)^{2\alpha+1}p(x_1,x_2),
\end{align*}
where $(a)_{n}$ stands for the Pochhammer symbol. It is seen that the above corresponds to the hypergeometric expression \eqref{CK-X3}.
\end{proof}
The inverse of the isomorphism $\mathbf{CK}_{x_3}^{\mu_3}$ is clearly given by $I_{x_3}$ with $I_{x_3}f(x_1,x_2,x_3)=f(x_1,x_2,0)$. When $\mu_3=0$, the operator $\mathbf{CK}_{x_3}^{\mu_3}$ reduces to the well-known Cauchy-Kovalevskaia extension operator for the standard Dirac operator, as determined in \cite{1982_Brackx&Delanghe&Sommen}. It is manifest that proposition 1 can be extended to any dimension. Thus, in a similar fashion, one has the isomorphism
\begin{align*}
 \mathbf{CK}_{x_2}^{\mu_2}: \mathcal{P}_{k}(\mathbb{R})\otimes \mathbb{C}^2\longrightarrow \mathcal{M}_{k}(\mathbb{R}^2),
\end{align*}
between the space of spinor-valued homogeneous polynomials in the variable $x_1$ and the space of Dunkl monogenics of degree $k$ in the variables $(x_1,x_2)$. This isomorphism has the explicit expression
\begin{align}
\label{CK-X2}
 \mathbf{CK}_{x_2}^{\mu_2}={}_0F_{1}\left(\genfrac{}{}{0pt}{}{-}{\mu_2+1/2}\;\Big\rvert\;-\left(\frac{x_2\,\sigma_1 T_1}{2}\right)^2\right)-\frac{\sigma_2\,x_2\,(\sigma_1 T_1)}{2\mu_2+1}\;{}_0F_{1}\left(\genfrac{}{}{0pt}{}{-}{\mu_2+3/2}\;\Big\rvert\;-\left(\frac{x_2\,\sigma_1 T_1}{2}\right)^2\right).
\end{align}
\subsection{A basis for $\mathcal{M}_{N}(\mathbb{R}^3)$}
Let us now show how a basis for the space of Dunkl monogenics of degree $N$ in $\mathbb{R}^3$ can be constructed using the $\mathbf{CK}_{x_i}^{\mu_i}$ extension operators and the Fischer decomposition theorem \eqref{Fischer-Decomposition}. Let $\chi_{+}=(1,0)^{\top}$ and $\chi_{-}=(0,1)^{\top}$ denote the basis spinors; one has $\mathbb{C}^2=\mathrm{Span}\{\chi_{\pm}\}$. Consider the following tower of $\mathbf{CK}$ extensions and Fischer decompositions:
\begin{center}
\begin{equation*}
\xymatrix@=12pt{
&&& \mathcal{P}_{N}(\mathbb{R}^2)\otimes \mathbb{C}^2 \ar[rr]_-{\mathbf{CK}_{x_3}^{\mu_3}}&& \mathcal{M}_{N}(\mathbb{R}^3) 
\\
\mathrm{Span}\{\,x_1^{N}\;\chi_{\pm}\,\}=\mathcal{P}_{N}(\mathbb{R})\otimes \mathbb{C}^2 \ar[rr]_-{\mathbf{CK}_{x_2}^{\mu_2}}&& \mathcal{M}_{N}(\mathbb{R}^2) \ar@{~>}[r]& \mathcal{M}_{N}(\mathbb{R}^2) \ar@{}[u]_{\parallel}&& 
\\
\mathrm{Span}\{\,x_1^{N-1}\;\chi_{\pm}\,\}=\mathcal{P}_{N-1}(\mathbb{R})\otimes \mathbb{C}^2 \ar[rr]_-{\mathbf{CK}_{x_2}^{\mu_2}}&&  \mathcal{M}_{N-1}(\mathbb{R}^2) \ar@{~>}[r]& \widetilde{\underline{x}}\mathcal{M}_{N-1}(\mathbb{R}^2) \ar@{}[u]_{\bigoplus}&& 
\\
\vdots && & \;\,\quad \vdots\ar@{}[u]_{\bigoplus}  &&
\\
\mathrm{Span}\{\,x_1^{k}\;\chi_{\pm}\,\}=\mathcal{P}_{k}(\mathbb{R})\otimes \mathbb{C}^2 \ar[rr]_-{\mathbf{CK}_{x_2}^{\mu_2}}&&  \mathcal{M}_{k}(\mathbb{R}^2) \ar@{~>}[r]& \widetilde{\underline{x}}^{N-k}\mathcal{M}_{k}(\mathbb{R}^2) \ar@{}[u]_{\bigoplus} &&  \sim\psi_{k,\pm}^{(N)}
\\
\vdots && & \;\,\quad \vdots\ar@{}[u]_{\bigoplus} &&
\\
\mathrm{Span}\{\,x_1\;\chi_{\pm}\,\}=\mathcal{P}_{1}(\mathbb{R})\otimes \mathbb{C}^2 \ar[rr]_-{\mathbf{CK}_{x_2}^{\mu_2}} &&  \mathcal{M}_{1}(\mathbb{R}^2) \ar@{~>}[r] & \widetilde{\underline{x}}^{N-1}\mathcal{M}_{1}(\mathbb{R}^2) \ar@{}[u]_{\bigoplus}&& 
\\
\mathrm{Span}\{\chi_{\pm}\}=\mathcal{P}_{0}(\mathbb{R})\otimes \mathbb{C}^2 \ar[rr]_-{\mathbf{CK}_{x_2}^{\mu_2}}&&  \mathcal{M}_{0}(\mathbb{R}^2) \ar@{~>}[r]& \widetilde{\underline{x}}^{N}\mathcal{M}_{0}(\mathbb{R}^2) \ar@{}[u]_{\bigoplus}&& 
}
\end{equation*}
Diagram 1. Horizontally, application of the $\mathbf{CK}$ map and multiplication by $\widetilde{\underline{x}}$. Vertically, Fischer decomposition theorem for $\mathcal{P}_{N}(\mathbb{R}^2)\otimes \mathbb{C}^2$. 
\end{center}
As can be seen from the above diagram, the spinors
\begin{align}
\label{Basis}
 \psi_{k,\pm}^{(N)}=\mathbf{CK}_{x_3}^{\mu_3}\Big[\widetilde{\underline{x}}^{N-k}\;\mathbf{CK}_{x_2}^{\mu_2}\big[x_1^{k}\big]\Big]\chi_{\pm},\qquad k=0,1,\ldots,N,
\end{align}
provide a basis for the space of Dunkl monogenics of degree $N$ in $(x_1,x_2,x_3)$. The basis spinors \eqref{Basis} can be calculated explicitly. To perform the calculation, one needs the identities
\begin{align}
\label{Formulas}
\begin{aligned}
 \widetilde{\underline{D}}^{2\alpha}\,\widetilde{\underline{x}}^{2\beta}\,M_k&=2^{2\alpha}(-\beta)_{\alpha}(1-k-\beta-\gamma_2)_{\alpha}\;\widetilde{\underline{x}}^{2\beta-2\alpha}\,M_{k},
 \\
  \widetilde{\underline{D}}^{2\alpha+1}\,\widetilde{\underline{x}}^{2\beta}\,M_k&=\beta\;2^{2\alpha+1}(1-\beta)_{\alpha}(1-k-\beta-\gamma_2)_{\alpha}\;\widetilde{\underline{x}}^{2\beta-2\alpha-1}M_{k},
  \\
  \widetilde{\underline{D}}^{2\alpha}\,\widetilde{\underline{x}}^{2\beta+1}\,M_k&=2^{2\alpha}\;(-\beta)_{\alpha}\;(-k-\beta-\gamma_2)_{\alpha}\;\widetilde{\underline{x}}^{2\beta-2\alpha+1}M_{k},
  \\
    \widetilde{\underline{D}}^{2\alpha+1}\,\widetilde{\underline{x}}^{2\beta+1}\,M_k&=(k+\beta+\gamma_2)\;2^{2\alpha+1}\,(-\beta)_{\alpha}(1-k-\beta-\gamma_2)_{\alpha}\;\widetilde{\underline{x}}^{2\beta-2\alpha}\,M_{k},
    \end{aligned}
\end{align}
where $M_{k}\in \mathcal{M}_{k}(\mathbb{R}^2)$ and $\gamma_2=\mu_1+\mu_2+1$. The formulas \eqref{Formulas}, given in \cite{2012_DeBie&Orsted&Somberg&Soucek_TransAmerMathSoc_364_3875} for arbitrary dimension, are easily obtained from the relations
\begin{align*}
 \widetilde{\underline{D}}\;\widetilde{\underline{x}}^{2\beta} M_{k}=2\beta\;\widetilde{\underline{x}}^{2\beta-1} M_{k},\qquad \widetilde{\underline{D}}\,\widetilde{\underline{x}}^{2\beta+1}M_{k}=2(\beta+k+\gamma_2)\,\widetilde{\underline{x}}^{2\beta} M_{k},
\end{align*}
which follow from the commutation relations
\begin{align}
\label{Commu-2}
 [\widetilde{\underline{D}}, \widetilde{\underline{x}}^2]=2 \widetilde{\underline{x}},\quad \{\widetilde{\underline{D}},\widetilde{\underline{x}}\}=2(\widetilde{\mathbb{E}}+\gamma_2).
\end{align}
Similar formulas hold in the one-dimensional case. To present the result, we shall need the Jacobi polynomials $P_{n}^{(\alpha,\beta)}(x)$, defined as \cite{2010_Koekoek_&Lesky&Swarttouw}
\begin{align*}
 P_{n}^{(\alpha,\beta)}(x)=\frac{(\alpha+1)_{n}}{n!}\;
 {}_2F_{1}\left(
 \genfrac{}{}{0pt}{}{-n,n+\alpha+\beta+1}{\alpha+1}
 \;\Big \rvert
 \;
 \frac{1-x}{2}
 \right).
\end{align*}
The following identity:
\begin{align*}
 (x+y)^{m}P_{m}^{(\alpha,\beta)}\left(\frac{x-y}{x+y}\right)=\frac{(\alpha+1)_{m}}{m!}\,x^{m}\,
 {}_2F_{1}\left(
 \genfrac{}{}{0pt}{}{-m,-m-\beta}{\alpha+1}
 \;\Big \rvert
 \;
 -\frac{y}{x}
 \right),
\end{align*}
will also be needed.

Computing \eqref{Basis} using the definitions \eqref{CK-X3}, \eqref{CK-X2}, the formulas \eqref{Formulas} and the above identity, a long but otherwise straightforward calculation shows that the basis spinors have the expression
\begin{align}
\label{psi}
 \psi_{k,\pm}^{(N)}=q_{N-k}(x_3,\widetilde{\underline{x}})\,m_{k}(x_2,x_1)\,\chi_{\pm},\qquad k=0,\ldots,N,
\end{align}
where
\begin{align*}
 m_{k}(x_1,x_2)=\mathbf{CK}_{x_2}^{\mu_2}[x_1^{k}].
\end{align*}
One has
\begin{multline}
\label{q}
 q_{N-k}(x_3,\widetilde{\underline{x}})=\frac{\beta!}{(\mu_3+1/2)_{\beta}}\,(x_1^2+x_2^2+x_3^2)^{\beta}\,
 \\ \times
 \begin{cases}
  P_{\beta}^{(\mu_3-1/2,k+\mu_1+\mu_2)}\left(\frac{x_1^2+x_2^2-x_3^2}{x_1^2+x_2^2+x_3^2}\right)
  & \text{$N-k=2\beta$},
  \\[.2cm]
  \qquad \qquad \qquad -\frac{\sigma_3 x_3 \widetilde{\underline{x}}}{x_1^2+x_2^2+x_3^2}\;P_{\beta-1}^{(\mu_3+1/2,k+\mu_1+\mu_2+1)}\left(\frac{x_1^2+x_2^2-x_3^2}{x_1^2+x_2^2+x_3^2}\right),
  \\
  \\
  \widetilde{\underline{x}}\;P_{\beta}^{(\mu_3-1/2,k+\mu_1+\mu_2+1)}\left(\frac{x_1^2+x_2^2-x_3^2}{x_1^2+x_2^2+x_3^2}\right)
  ,
  &
  \text{$N-k=2\beta+1$},
  \\[.2cm]
  \qquad \qquad \qquad 
  -\sigma_3 x_3\left(\frac{k+\beta+\mu_1+\mu_2+1}{\beta+\mu_3+1/2}\right)\;P_{\beta}^{(\mu_3+1/2,k+\mu_1+\mu_2)}\left(\frac{x_1^2+x_2^2-x_3^2}{x_1^2+x_2^2+x_3^2}\right),
 \end{cases}
\end{multline}
and
\begin{multline}
\label{m}
 m_{k}(x_2,x_1)=\frac{\beta!}{(\mu_2+1/2)_{\beta}}(x_1^2+x_2^2)^{\beta}
 \\
 \times
 \begin{cases}
  P_{\beta}^{(\mu_2-1/2,\mu_1-1/2)}\left(\frac{x_1^2-x_2^2}{x_1^2+x_2^2}\right)-\frac{\sigma_2 x_2\sigma_1 x_1}{x_1^2+x_2^2}\,P_{\beta-1}^{(\mu_2+1/2,\mu_1+1/2)}\left(\frac{x_1^2-x_2^2}{x_1^2+x_2^2}\right), & \text{$k=2\beta$},
  \\
  x_1\,P_{\beta}^{(\mu_2-1/2,\mu_1+1/2)}\left(\frac{x_1^2-x_2^2}{x_1^2+x_2^2}\right)
  -\sigma_2\,x_2\,\sigma_1 \left(\frac{\beta+\mu_1+1/2}{\beta+\mu_2+1/2}\right)\,P_{\beta}^{(\mu_2+1/2,\mu_1-1/2)}\left(\frac{x_1^2-x_2^2}{x_1^2+x_2^2}\right),
  & \text{$k=2\beta+1$}.
 \end{cases}
\end{multline}
\subsection{Basis spinors and representations of the Bannai--Ito algebra}
The basis vectors $\psi_{k,\pm}^{(N)}$ transform irreducibly under the action of the Bannai--Ito algebra. This can be established as follows. By construction, $\psi_{k,\pm}^{(N)}\in \mathcal{M}_{N}(\mathbb{R}^3)$, and thus \eqref{Gamma-Eigenvalues} gives
\begin{align*}
 (\Gamma+1)\;\psi_{k,\pm}^{(N)}=(N+\mu_1+\mu_2+\mu_3+1)\;\psi_{k,\pm}^{(N)}.
\end{align*}
Hence we have
\begin{align}
\label{QQ}
 Q\psi_{k,\pm}^{(N)}=\left((\Gamma+1)^2+\mu_1^2+\mu_2^2+\mu_3^2-1/4\right)=q_{N}\,\psi_{k,\pm}^{(N)},
\end{align}
as in \eqref{Eigen-Setup}. The spinors \eqref{psi} are also eigenvectors of $K_3$. To prove this result, one first observes that $K_3$ can be written as
\begin{align*}
 K_3=-\frac{1}{2}\left([\widetilde{\underline{x}},\widetilde{\underline{D}}]+1\right)R_1R_2.
\end{align*}
Since $K_3$ acts only on the variables $(x_1,x_2)$ and since $[K_3,\widetilde{\underline{x}}]=0$, one has
\begin{align*}
 K_3\;\psi_{k,\pm}^{(N)}&=K_3\;\mathbf{CK}_{x_3}^{\mu_3}\Big[\widetilde{\underline{x}}^{N-k}\;\mathbf{CK}_{x_2}^{\mu_2}\big[x_1^{k}\big]\Big]\chi_{\pm}
 = \mathbf{CK}_{x_3}^{\mu_3}\Big[\widetilde{\underline{x}}^{N-k}\;K_3\,\mathbf{CK}_{x_2}^{\mu_2}\big[x_1^{k}\big]\Big]\chi_{\pm}
 \\
 &=-\frac{(-1)^{k}}{2}\mathbf{CK}_{x_3}^{\mu_3}\Big[\widetilde{\underline{x}}^{N-k}\;\Big(\widetilde{\underline{x}}\;\widetilde{\underline{D}}-\widetilde{\underline{D}}\;\widetilde{\underline{x}}+1\Big)\,\mathbf{CK}_{x_2}^{\mu_2}\big[x_1^{k}\big]\Big]\chi_{\pm}
 \\
 &=-\frac{(-1)^{k}}{2}\mathbf{CK}_{x_3}^{\mu_3}\Big[\widetilde{\underline{x}}^{N-k}\;\Big(-2(\widetilde{E}+\gamma_2)+1\Big)\,\mathbf{CK}_{x_2}^{\mu_2}\big[x_1^{k}\big]\Big]\chi_{\pm},
\end{align*}
where in the last step the commutation relations \eqref{Commu-2} were used. Using the properties
\begin{align*}
 \widetilde{\underline{D}}\;\mathbf{CK}_{x_2}^{\mu_2}\big[x_1^{k}\big]=0,\quad \widetilde{\mathbb{E}}\;\mathbf{CK}_{x_2}^{\mu_2}\big[x_1^{k}\big]=k \;\mathbf{CK}_{x_2}^{\mu_2}\big[x_1^{k}\big],\qquad R_1R_2 \;\mathbf{CK}_{x_2}^{\mu_2}\big[x_1^{k}\big]=(-1)^{k}\;\mathbf{CK}_{x_2}^{\mu_2}\big[x_1^{k}\big].
\end{align*}
one finds that
\begin{align}
\label{KK}
 K_3\,\psi_{k,\pm}^{(N)}=(-1)^{k}(k+\mu_1+\mu_2+1/2)\,\psi_{k,\pm}^{(N)}.
\end{align}
Upon combining \eqref{QQ} and \eqref{KK}, it is seen that the spinors \eqref{Basis} satisfy the defining properties of the  basis vectors $\ket{N,k}$ for the representations of the Bannai--Ito algebra constructed in section 4. The spinors $\psi_{k,\pm}^{(N)}$ however possess an extra label $\pm$ associated to the eigenvalues of the symmetry operator $Z_3=\sigma_3 R_3$. Indeed, it is directly verified from the explicit expression \eqref{q} and \eqref{m} that one has
\begin{align*}
 Z_3\,\psi_{k,\pm}^{(N)}=\pm (-1)^{N-k}\,\psi_{k,\pm}^{(N)}.
\end{align*}
It follows that each of the two independent sets of basis vectors 
\begin{align*}
 \{\psi_{k,+}^{(N)}\;\rvert \; k=0,1,\ldots,N\},\qquad \{\psi_{k,-}^{(N)}\;\rvert \; k=0,1,\ldots,N\},
\end{align*}
supports a unitary $(N+1)$-dimensional irreducible representation of the Bannai--Ito algebra as constructed in section 4. As a consequence, the space of Dunkl monogenics $\mathcal{M}_{N}(\mathbb{R}^3)$ of degree $N$ can be expressed as a direct sum of two such representations. Since $\mathrm{dim}\,\mathcal{M}_{N}(\mathbb{R}^3)=2\times(N+1)$, the dimensions of the spaces match. 
\subsection{Normalized wavefunctions}
The wavefunctions \eqref{psi} can  be presented in a normalized fashion. We define
\begin{align}
\label{Wavefunctions}
 \Psi_{k,\pm}^{(N)}(x_1,x_2,x_3)=\Theta_{N,k}(x_1,x_2,x_3)\;\Phi_{k}(x_1,x_2)\;\chi_{\pm},
\end{align}
with
\begin{multline}
\label{Phi}
 \Phi_{k}(x_1,x_2)=\sqrt{\frac{\beta!\;\Gamma(\beta+\mu_1+\mu_2+1)}{2\,\Gamma(\beta+\mu_1+1/2)\,\Gamma(\beta+\mu_2+1/2)}}\;(x_1^2+x_2^2)^{\beta}
 \\
 \times
 \begin{cases}
  P_{\beta}^{(\mu_2-1/2,\mu_1-1/2)}\left(\frac{x_1^2-x_2^2}{x_1^2+x_2^2}\right) \mathbb{1}
& \text{$k=2\beta$,}
\\[.2cm]
\qquad \qquad \qquad +\frac{\sigma_1 x_1\,\sigma_2 x_2}{x_1^2+x_2^2}\,P_{\beta-1}^{(\mu_2+1/2,\mu_1+1/2)}\left(\frac{x_1^2-x_2^2}{x_1^2+x_2^2}\right), &
\\[.3cm]
\sqrt{\frac{\beta+\mu_2+1/2}{\beta+\mu_1+1/2}}\;x_1\;P_{\beta}^{(\mu_2-1/2,\mu_1+1/2)}\left(\frac{x_1^2-x_2^2}{x_1^2+x_2^2}\right)\mathbb{1} & \text{$k=2\beta+1$},
\\[.2cm]
\qquad \qquad \qquad  -\sqrt{\frac{\beta+\mu_1+1/2}{\beta+\mu_2+1/2}}\;\sigma_2\,\sigma_1\,x_2\; P_{\beta}^{(\mu_2+1/2,\mu_1-1/2)}\left(\frac{x_1^2-x_2^2}{x_1^2+x_2^2}\right),&
 \end{cases}
\end{multline}
and where
\begin{multline}
\label{Theta}
 \Theta_{N,k}(x_1,x_2,x_3)=\sqrt{\frac{\beta!\;\Gamma(\beta+k+\mu_1+\mu_2+\mu_3+3/2)}{\Gamma(\beta+\mu_3+1/2)\;\Gamma(\beta+k+\mu_1+\mu_2+1)}}
 \\
 \times
 \begin{cases}
  P_{\beta}^{(\mu_3-1/2,k+\mu_1+\mu_2+1)}\left(\frac{x_1^2+x_2^2-x_3^2}{x_1^2+x_2^2+x_3^2}\right)\mathbb{1} & \text{$N-k=2\beta$,}
  \\[.2cm]
  \qquad \qquad +\frac{(\sigma_1x_1+\sigma_2x_2)\;\sigma_3 x_3}{x_1^2+x_2^2+x_3^2}\;P_{\beta-1}^{(\mu_3+1/2,k+\mu_1+\mu_2+1)}\left(\frac{x_1^2+x_2^2-x_3^2}{x_1^2+x_2^2+x_3^2}\right), & 
  \\[.4cm]
  \sqrt{\frac{\beta+\mu_3+1/2}{\beta+k+\mu_1+\mu_2+1}}\;(\sigma_1x_1+\sigma_2 x_2)\;P_{\beta}^{(\mu_3-1/2,k+\mu_1+\mu_2+1)}\left(\frac{x_1^2+x_2^2-x_3^2}{x_1^2+x_2^2+x_3^2}\right) & \text{$N-k=2\beta+1$.}
  \\[.2cm]
  \qquad \qquad \qquad -\sqrt{\frac{k+\beta+\mu_1+\mu_2+1}{\beta+\mu_3+1/2}}\;\sigma_3 x_3\;P_{\beta}^{(\mu_3+1/2,k+\mu_1+\mu_2)}\left(\frac{x_1^2+x_2^2-x_3^2}{x_1^2+x_2^2+x_3^2}\right),
 \end{cases}
\end{multline}
In \eqref{Phi} and \eqref{Theta}, the symbol $\mathbb{1}$ stands for the $2\times 2$ identity operator and $\Gamma(x)$ is the standard Gamma function \cite{2001_Andrews&Askey&Roy}. Introduce the scalar product
\begin{align}
\label{Scalar}
 \langle \Lambda, \Psi \rangle=\int_{S^2} (\Lambda^{\dagger}\cdot \Psi) \;h(x_1,x_2,x_3)\;\mathrm{d}x_1\mathrm{d}x_2\mathrm{d}x_3,
\end{align}
where $h(x_1,x_2,x_3)$ is the $\mathbb{Z}_2^3$ invariant weight function \cite{2014_Dunkl&Xu}
\begin{align*}
 h(x_1,x_2,x_3)=|x_1|^{2\mu_1}|x_2|^{2\mu_2}|x_3|^{2\mu_3}.
\end{align*}
It is directly verified (see for example \cite{2013_Genest&Ismail&Vinet&Zhedanov_JPhysA_46_145201}) that the spherical Dirac-Dunkl operator $\Gamma$ and its symmetry operators $K_i$, $Z_i$ are self-adjoint with respect to the scalar product \eqref{Scalar}. Upon writing the wavefunctions \eqref{Wavefunctions} in the spherical coordinates, it follows from the orthogonality relation of the Jacobi polynomials (see for example \cite{2010_Koekoek_&Lesky&Swarttouw}) that the wavefunctions \eqref{Wavefunctions} satisfy the orthogonality relation
\begin{align*}
 \langle \Psi_{k',j}^{(N')},\Psi_{k,j'}^{(N)}\rangle=\delta_{kk'}\delta_{NN'}\delta_{jj'}.
\end{align*}
\subsection{Role of the Bannai--Ito polynomials}
Let us briefly discuss the role played by the Bannai--Ito polynomials in the present picture. It is known that these polynomials arise as overlap coefficients between the respective eigenbases of any pair of generators of the Bannai--Ito algebra in the representations \eqref{Eigen-Setup}  \cite{2014_Genest&Vinet&Zhedanov_ProcAmMathSoc_142_1545, 2012_Tsujimoto&Vinet&Zhedanov_AdvMath_229_2123}. We introduce the basis $\Upsilon_{s,\pm}^{(N)}$ defined by
\begin{align}
\label{Upsilon}
 \Upsilon_{s,\pm}^{(N)}=\widetilde{\Theta}_{N,s}(x_2,x_3,x_1)\;\widetilde{\Phi}_{s}(x_2,x_3)\;\chi_{\pm},\qquad s=0,\ldots,N,
\end{align}
where $\widetilde{\Theta}$ and $\widetilde{\Phi}$ are obtained from \eqref{Phi} and \eqref{Theta} by applying the permutation $(\mu_1,\mu_2,\mu_3)\rightarrow (\mu_2,\mu_3,\mu_1)$. It is easily seen from \eqref{BI-Operators} and \eqref{Gamma-k} that the wavefunctions \eqref{Upsilon} satisfy the eigenvalue equations
\begin{align*}
 (\Gamma+1)\,\Upsilon_{s,\pm}^{(N)}&=(N+\mu_1+\mu_2+\mu_3+1)\,\Upsilon_{s,\pm}^{(N)},
 \\
 K_1\,\Upsilon_{s,\pm}^{(N)}&=(-1)^{s}(s+\mu_2+\mu_3+1/2)\,\Upsilon_{s,\pm}^{(N)},
 \\
 \sigma_3 R_3\,\Upsilon_{s,\pm}^{(N)}&=\pm (-1)^{N-s}\,\Upsilon_{s,\pm}^{(N)}.
\end{align*}
With the scalar product \eqref{Scalar}, the overlap coefficients between the bases $\Psi_{k,\pm}^{(N)}$ and $\Upsilon_{s,\pm}^{(N)}$ are defined as
\begin{align*}
\langle \Upsilon_{s,q}^{(N)},\, \Psi_{k,r}^{(N)} \rangle=W_{s,k;q,r}^{(N)}.
\end{align*}
The coefficients $W_{s,k;q,r}^{(N)}$ can be expressed in terms of the Bannai--Ito polynomials (see \cite{2015_Genest&Vinet&Zhedanov_CommMathPhys}).
\section{Conclusion}
In this paper, we considered the Dirac--Dunkl operator on the two-sphere associated to the $\mathbb{Z}_2^3$ Abelian reflection group. We have obtained its symmetries and shown that they generate the Bannai--Ito algebra. We have built the relevant representations of the Bannai--Ito algebra using ladder operators. Finally, using a Cauchy-Kovalevskaia extension theorem, we have constructed the eigenfunctions of the spherical Dirac--Dunkl operator and we have shown that they transform according to irreducible representations of the Bannai--Ito algebra.

As observed in this paper, the formulas \eqref{BI-Algebra} can be considered as a three-parameter deformation of the algebra $sl_{2}$ and as such, it can be considered to have rank one. It would of great interest in the future to generalize the Bannai--Ito algebra to arbitrary rank. In that regard, the study of the Dirac--Dunkl operator in $n$ dimensions associated to the $\mathbb{Z}_2^{n}$ reflection group is interesting.
\section*{Acknowledgements}
H.D.B. is grateful for the hospitality extended to him by the Centre de recherches mathématiques, where this research was carried. V.X.G. holds an Alexander--Graham--Bell fellowship from the Natural Science and Engineering Research Council of Canada (NSERC). The research of L.V. is supported in part by NSERC.
\begin{multicols}{2}
\footnotesize

\end{multicols}
\end{document}